\title{Categorical characterizations of operator-valued measures}
\author{Frank Roumen
\institute{Inst. for Mathematics, Astrophysics and Particle Physics (IMAPP) \\
Radboud University Nijmegen}
\email{F.Roumen@math.ru.nl}}
\newcommand{\blank}{-}
\newcommand{\dd}{\,\mathrm{d}}
\newcommand{\der}[2]{\frac{\dd #1}{\dd #2}}
\newcommand{\Linf}{L^{\infty}}
\newcommand{\id}{\mathrm{id}}
\newcommand{\indicator}[1]{\ensuremath{\mathbf{1}_{#1}}}
\newcommand{\ket}[1]{\ensuremath{|{\kern.1em}#1{\kern.1em}\rangle}}
\newcommand{\bra}[1]{\ensuremath{\langle\,#1\,|}}
\newcommand{\Pred}{\HomKl(\blank,2)}
\newcommand{\C}{\mathbb{C}}
\renewcommand{\S}{\mathbb{S}}
\newcommand{\Cat}[1]{\ensuremath{\mathbf{#1}}}
\newcommand{\EA}{\Cat{EA}\xspace}
\newcommand{\EMod}{\Cat{EMod}\xspace}
\newcommand{\sEA}{\Cat{\sigma EA}\xspace}
\newcommand{\sEMod}{\Cat{\sigma EMod}\xspace}
\newcommand{\Meas}{\Cat{Meas}\xspace}
\newcommand{\Measure}{\Cat{Measure}\xspace}
\newcommand{\Hilbisomet}{\Cat{Hilb_{isomet}}\xspace}
\newcommand{\BNS}{\Cat{BNS}}
\newcommand{\Cstar}{\Cat{vN}\xspace}
\newcommand{\CstarPU}{\Cat{vN_{PU}}\xspace}
\newcommand{\Alg}{\Cat{Alg}}
\newcommand{\Kl}{\Cat{Kl}}
\newcommand{\comma}[2]{( #1 \downarrow #2)} 
\newcommand{\op}{^{\mathrm{op}}} 
\newcommand{\Giry}{\mathcal{G}}
\newcommand{\B}{\mathcal{B}}
\newcommand{\Pos}{\mathcal{P}\!os}
\newcommand{\Ef}{\mathcal{E}\!f}
\newcommand{\Proj}{\mathcal{P}\!roj}
\newcommand{\TC}{\mathcal{T}}
\newcommand{\DM}{\mathcal{D\!M}}
\DeclareMathOperator{\Hom}{Hom}
\DeclareMathOperator{\HomKl}{Hom_{\mathbf{Kl}}}
\DeclareMathOperator{\tr}{tr}
\newtheorem{theorem}{Theorem}
\newtheorem{lemma}[theorem]{Lemma}
\newtheorem{corollary}[theorem]{Corollary}
\newtheorem{proposition}[theorem]{Proposition}
\theoremstyle{definition}
\newtheorem{definition}[theorem]{Definition}
\newtheorem{example}[theorem]{Example}
\theoremstyle{remark}
\begin{document}

\maketitle

\begin{abstract}
    The most general type of measurement in quantum physics is modeled by a
    positive operator-valued measure (POVM). Mathematically, a POVM is a
    generalization of a measure, whose values are not real numbers, but
    positive operators on a Hilbert space. POVMs can equivalently be viewed as
    maps between effect algebras or as maps between algebras for the Giry
    monad. We will show that this equivalence is an instance of a duality
    between two categories. In the special case of continuous POVMs, we obtain
    two equivalent representations in terms of morphisms between von Neumann
    algebras.
\end{abstract}

\section{Introduction}\label{SecIntro}

The logic governing quantum measurements differs from classical logic, and it
is still unknown which mathematical structure is the best description of
quantum logic. The first attempt for such a logic was discussed in the famous
paper \cite{BirkhoffN36}, in which Birkhoff and von Neumann propose to use the
orthomodular lattice of projections on a Hilbert space. However, this approach
has been criticized for its lack of generality, see for instance
\cite{Schroeck96} for an overview of experiments that do not fit in the
Birkhoff-von Neumann scheme. The operational approach to quantum physics
generalizes the approach based on projective measurements. In this approach,
all measurements should be formulated in terms of the outcome statistics of
experiments. Thus the logical and probabilistic aspects of quantum mechanics
are combined into a unified description.

The basic concept of operational quantum mechanics is an effect on a Hilbert
space, which is a positive operator lying below the identity. It can be viewed
as a probabilistic version of a projection. The logical interpretation of an
effect is a predicate, or equivalently, a measurement with two possible
results. The logic of effects is useful in describing the semantics of quantum
programs via weakest preconditions, as argued in \cite{HondtP06}. A more
general treatment of the logical aspects of effects is given in
\cite{Jacobs12}. Both references use a duality between effects and convex sets
to relate syntax and semantics of the logic.

More generally, measurements with an arbitrary space of results can be modeled
as maps from the outcome space to the set of effects on a Hilbert space. These
maps are called positive operator-valued measures, or POVMs. This paper
presents several equivalent characterizations of POVMs, some of them
well-known, and some of them new. The results generalize the duality between
effects and convex sets. Thus they give a foundation for the connection
between syntax and semantics for a quantum logic where the predicates are
multivalued instead of two-valued.

The outline of this paper is as follows. Section~\ref{SecPreliminaries}
contains preliminaries about effect algebras, measure theory, and duality
between $\sigma$-effect modules and algebras for the Giry monad. This is
applied in Section~\ref{SecDuality} to obtain a categorical rephrasing of the
equivalence between POVMs and statistical maps. This result is already known
in the literature, but our systematic use of the abstract duality puts it in a
broader perspective. In Section~\ref{SecSeqComposition} we will generalize the
sequential composition operation on effects to POVMs. It will turn out that
this only works for a certain class of POVMs, namely those that are
differentiable with respect to an ambient measure. This gives a motivation to
study these differentiable POVMs in Section~\ref{SecContinuousPOVM}. To obtain
a duality result for differentiable POVMs, we will view them as morphisms
between von Neumann algebras.

\section{Preliminaries}\label{SecPreliminaries}

An effect algebra consists of a set $X$ equipped with a partial binary
operation $\oplus$, a unary operation $(\blank)^{\bot}$ called
orthocomplement, and a constant $0 \in X$, subject to the following
conditions:
\begin{itemize}
    \item The operation $\oplus$ is commutative, which means that whenever
        $x\oplus y$ is defined, also $y\oplus x$ is defined and $y\oplus x =
        x\oplus y$.
    \item The operation $\oplus$ is associative, defined in a similar way.
    \item $x \oplus 0 = 0 \oplus x = x$ for all $x\in X$.
    \item For every $x\in X$, $x^{\bot}$ is the unique element for which $x
        \oplus x^{\bot} = 1$, where $1$ is defined as $0^{\bot}$.
    \item If $x \oplus 1$ is defined, then $x=0$.
\end{itemize}
Effect algebras constitute a category $\EA$, in which the morphisms are
functions preserving $\oplus$, $(\blank)^{\bot}$, and $0$.
Effect algebras originated in the study of quantum logics in \cite{FoulisB94},
and can be used to describe both the probabilistic and the logical aspects of
quantum mechanics. An overview of the theory of effect algebras is given in
\cite{DvurecenskijP00}.

The principal example of an effect algebra is the unit interval $[0,1]$.
Addition serves as a partially defined binary operation, and the
orthocomplement is given by $x^{\bot} = 1-x$. Another important example comes
from quantum logic. An effect on a Hilbert space $H$ is an operator $A : H \to
H$ for which $0 \leq A \leq \id$. The set $\Ef(H)$ of all effects on $H$ forms
an effect algebra, in which the partial binary operation is again addition,
and orthocomplement is $A^{\bot} = \id - A$. Furthermore each Boolean algebra
$B$ can be viewed as an effect algebra, where $x\oplus y$ is defined if and
only if $x \wedge y = 0$, and in that case $x\oplus y = x \vee y$. The
orthocomplement is simply the complement in $B$.

Some effect algebras carry additional structure, which leads to several
commonly used subcategories of $\EA$. First we will consider the subcategory
$\EMod$ of effect modules. An effect module is an effect algebra $X$ endowed
with a scalar multiplication $\cdot : [0,1] \times X \to X$, such that
\begin{itemize}
    \item $r\cdot (s\cdot x) = (rs) \cdot x$.
    \item If $r + s \leq 1$, then $(r+s)\cdot x = r\cdot x \oplus s\cdot x$.
    \item If $x\oplus y$ is defined, then $r \cdot (x \oplus y) = r\cdot x
        \oplus r\cdot y$.
    \item $1\cdot x = x$.
\end{itemize}
Effect modules were introduced in \cite{GudderP98} under the name `convex
effect algebras', and generalized in \cite{JacobsM12c} to modules over
arbitrary effect algebras with a monoid structure, rather than just over the
interval $[0,1]$.
Morphisms of effect modules are morphisms of effect algebras that additionally
preserve the scalar multiplication. From our three examples of effect
algebras, only $[0,1]$ and $\Ef(H)$ are effect modules.

If $X$ is any effect algebra, then we can define a partial order on $X$ by
setting $x \leq y$ if and only if $x \oplus z = y$ for some $z \in X$. The
algebra $X$ is said to be an $\sigma$-effect algebra if each countable chain
in $X$ has a join in $X$. This gives rise to a subcategory $\sEA$ of $\EA$ in
which the morphisms also preserve joins of countable chains. A $\sigma$-effect
algebra that is at the same time an effect module is called a $\sigma$-effect
module, and they constitute a category $\sEMod$. The unit interval and
$\Ef(H)$ are always $\sigma$-effect modules. A Boolean algebra is a
$\sigma$-effect algebra if and only if it is a $\sigma$-algebra.

Given two effect algebras $X$ and $Y$, one can form their tensor product $X
\otimes Y$ characterizing the bimorphisms out of $X \times Y$. This tensor
product can be used to construct free effect modules: for any effect algebra
$X$, the tensor product $[0,1] \otimes X$ is the free effect module generated
by $X$. The situation is more subtle for $\sigma$-effect algebras, because the
tensor product of two $\sigma$-effect algebras need not always exist. This
problem is discussed in \cite{Gudder98}. 

Effect algebras also occur in measure theory. A measurable space consists of a
set $X$ together with a $\sigma$-algebra of subsets of $X$, denoted
$\Sigma_X$. Measurable spaces constitute a category $\Meas$, in which the maps
from $X$ to $Y$ are functions $f : X \to Y$ for which $f^{-1}(\Sigma_Y)
\subseteq \Sigma_X$. As each $\sigma$-algebra is an effect algebra with
countable joins, there is a functor $\Sigma_{(\blank)} : \Meas \to \sEA\op$.
The $\sigma$-effect algebra $\Sigma_X$ can be turned into a $\sigma$-effect
module by taking the tensor product $[0,1] \otimes \Sigma_X$. In
\cite{Gudder98} it is shown that this tensor product exists and is isomorphic
to the algebra $\Meas(X,[0,1])$ of measurable functions from $X$ to the unit
interval. In other words, $\Meas(X,[0,1])$ is the free $\sigma$-effect module
generated by the $\sigma$-algebra $\Sigma_X$.

Giry initiated the categorical approach to measure and integration theory in
\cite{Giry82} by defining the Giry monad $\Giry$ on the category $\Meas$ as
$\Giry(X) = \sEA(\Sigma_X,[0,1])$. Thus the elements of $\Giry(X)$ are
probability measures. A measurable map $p : X \to [0,1]$ can be integrated
along a probability measure $\varphi \in \Giry(X)$ to obtain $\int p \dd
\varphi \in [0,1]$, sometimes written as $\int p(x) \dd x$ if $\varphi$ is
understood.

In \cite{Jacobs13} it is shown that there is a dual adjunction between
Eilenberg-Moore algebras for the Giry monad and $\sigma$-effect modules:
\begin{equation}
    \label{EqAdjGiryEMod}
    \xymatrix@R+1pc{\Alg(\Giry) \ar@/^/[rr]^{\Hom(\blank, [0,1])} & \bot &
    \sEMod\op \ar@/^/[ll]^{\Hom(\blank, [0,1])}}
\end{equation}
This gives a foundation for probabilistic and quantum logic, since a
$\Giry$-algebra can be considered as the state space of a system, and the
corresponding $\sigma$-effect module gives the predicates on that system.

\section{Duality for POVMs}\label{SecDuality}

Effects on a Hilbert space can be seen as yes-no questions about the physical
system represented by the Hilbert space. It is also possible to consider more
general questions, which have answers lying in an arbitrary measurable space.
These can be mathematically modeled by positive operator-valued measures.

\begin{definition}
    Let $(X, \Sigma_X)$ be a measurable space.  A \emph{positive
    operator-valued measure (POVM)} on $X$ is a morphism $\Sigma_X \to \Ef(H)$
    of $\sigma$-effect algebras.
    A POVM is a \emph{projection-valued measure (PVM)} if its image is
    contained in $\Proj(H)$.
\end{definition}

We will study these POVMs from the viewpoint of categorical logic. The syntax
of a logic is obtained by defining operations on predicates, leading to an
algebraic structure. For instance, the predicates in operational quantum logic
are effects on a Hilbert space, and the appropriate operations are the
$\sigma$-effect algebra operations. The semantics of a logic is related to the
syntax via duality. In our quantum example, the semantics is given by density
matrices, since density matrices and effects are related via the duality
between convex sets and effect algebras, see \cite{JacobsM12b} for details.

In the remainder of this paper, we will try to establish a similar picture for
POVMs. This section considers a generalization of the duality for effects to
POVMs. The duality for POVMs will be based on the adjunction
\eqref{EqAdjGiryEMod}, so it is helpful to rephrase the definition of POVMs in
terms of morphisms between modules. 

\begin{lemma}
    There is a bijective correspondence between POVMs $\Sigma_X \to \Ef(H)$
    and morphisms of $\sigma$-effect modules $\Meas(X,[0,1]) \to \Ef(H)$.
    \label{LemPOVMEMod}
\end{lemma}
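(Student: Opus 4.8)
The plan is to recognize this bijection as an immediate consequence of the universal property of the free $\sigma$-effect module, which the excerpt has already established. Recall that $\Meas(X,[0,1])$ is the free $\sigma$-effect module generated by the $\sigma$-effect algebra $\Sigma_X$; this is exactly the content of the result cited from \cite{Gudder98}, namely that $[0,1] \otimes \Sigma_X \cong \Meas(X,[0,1])$. A freeness statement is precisely an adjunction between the forgetful functor $\sEMod \to \sEA$ and the functor sending $\Sigma_X$ to $\Meas(X,[0,1])$, so I would phrase the whole argument as an application of the corresponding hom-set isomorphism.

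First I would note that $\Ef(H)$ is a $\sigma$-effect module, as recorded in the preliminaries, so it makes sense to speak of $\sigma$-effect module morphisms into it. Next I would invoke the universal property: there is a unit map $\eta : \Sigma_X \to \Meas(X,[0,1])$ of $\sigma$-effect algebras (sending a measurable set to its indicator function $\indicator{S}$), such that every $\sigma$-effect algebra morphism $f : \Sigma_X \to \Ef(H)$ factors uniquely through $\eta$ as $f = \bar{f} \circ \eta$ for a $\sigma$-effect module morphism $\bar{f} : \Meas(X,[0,1]) \to \Ef(H)$. This assignment $f \mapsto \bar{f}$, with inverse $g \mapsto g \circ \eta$, is the desired bijection, and it is natural by the general theory of free constructions.

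The one point requiring genuine verification, rather than a bare appeal to freeness, is that a $\sigma$-effect algebra morphism out of $\Sigma_X$ is the same data as a POVM in the sense just defined — but this holds by definition, so the correspondence is immediate once the freeness of $\Meas(X,[0,1])$ is in hand. Concretely, given a POVM $f$, the extension $\bar{f}$ is computed by approximating a measurable function $p : X \to [0,1]$ by simple functions and taking a countable join, using that $f$ preserves $\oplus$, orthocomplements, and countable joins; conversely, restricting any module morphism along $\eta$ recovers a POVM. The main obstacle, if one wishes to be self-contained rather than citing \cite{Gudder98}, would be checking that this integration-style extension is well defined and independent of the choice of approximating simple functions, and that it indeed preserves the scalar multiplication by $[0,1]$; but since the excerpt grants us the isomorphism $[0,1] \otimes \Sigma_X \cong \Meas(X,[0,1])$ together with its universal property, I would simply cite that and let the abstract adjunction deliver the bijection.
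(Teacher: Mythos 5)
Your proposal is correct and matches the paper's own argument: the paper likewise derives the bijection immediately from Gudder's freeness result, with the forward map given by integration $p \mapsto \int p \dd \varphi$ and the inverse given by evaluation at indicator functions $M \mapsto \Phi(\indicator{M})$, exactly your unit map $\eta$. Your extra remarks on the adjunction formulation and simple-function approximation are just a more explicit spelling-out of the same route.
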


This follows immediately from Gudder's result that $\Meas(X,[0,1])$ is the
free $\sigma$-effect module on $\Sigma_X$, which was briefly mentioned in
Section~\ref{SecPreliminaries}.  More explicitly, if $\varphi : \Sigma_X \to
\Ef(H)$ is a POVM, then the corresponding map $\Meas(X,[0,1]) \to \Ef(H)$ is
given by integration along the POVM $\varphi$, i.e. $p \mapsto \int p \dd
\varphi$.  The inverse construction is evaluation at an indicator function,
that is, a map $\Phi : \Meas(X,[0,1]) \to \Ef(H)$ gives a POVM $M \mapsto
\Phi(\indicator{M})$.

\begin{lemma}
    \label{LemDMfunctor}
    There is a functor $\DM : \Hilbisomet \to \Alg(\Giry)$ that maps a Hilbert
    space $H$ to the set of density matrices on $H$. Here $\Hilbisomet$ is the
    category with Hilbert spaces as objects and isometries as morphisms.
\end{lemma}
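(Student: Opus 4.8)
The plan is to exhibit, for each Hilbert space $H$, a canonical Giry-algebra structure on the set $\DM(H)$ of density matrices, and then to show that conjugation by an isometry is a morphism of such algebras. First I would turn $\DM(H)$ into a measurable space by equipping it with the $\sigma$-algebra generated by the evaluation maps $\rho \mapsto \tr(\rho A)$, where $A$ ranges over the effects in $\Ef(H)$. This is the coarsest $\sigma$-algebra making the trace pairing with every effect measurable, and it is precisely the structure under which $\DM(H)$ is dual to $\Ef(H)$.

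The core of the argument is the algebra structure map $\beta_H : \Giry(\DM(H)) \to \DM(H)$, which I would define as the barycenter: given a probability measure $\varphi \in \Giry(\DM(H))$, let $\beta_H(\varphi)$ be the unique density matrix determined by
\[
\tr\bigl(\beta_H(\varphi)\, A\bigr) = \int \tr(\rho A) \dd\varphi(\rho) \qquad \text{for all } A \in \Ef(H).
\]
Existence and uniqueness of such an operator follow from the duality between density matrices and effects: the right-hand side is a positive, normalized, affine functional of $A$, hence represented by a unique density matrix. Measurability of $\beta_H$ amounts to measurability of each $\varphi \mapsto \int \tr(\rho A)\dd\varphi(\rho)$, which holds by definition of the $\sigma$-algebra on $\Giry(\DM(H))$. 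It then remains to verify the two Eilenberg--Moore laws. The unit law $\beta_H \circ \eta_{\DM(H)} = \id$ is direct, since the Dirac measure at a density matrix $\sigma$ integrates $\rho \mapsto \tr(\rho A)$ to $\tr(\sigma A)$, forcing $\beta_H(\delta_\sigma) = \sigma$. The multiplication law $\beta_H \circ \Giry(\beta_H) = \beta_H \circ \mu_{\DM(H)}$ reduces, after pairing both sides with an arbitrary effect, to the Fubini-type property built into the Giry monad multiplication.

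For the action on morphisms, I would send an isometry $V : H \to K$ to the map $\DM(V) : \rho \mapsto V \rho V^*$. This lands in $\DM(K)$ precisely because $V$ is an isometry: from $V^*V = \id$ we get $\tr(V\rho V^*) = \tr(\rho V^*V) = \tr(\rho) = 1$, and positivity is preserved; this is exactly why the domain category is restricted to isometries rather than to all bounded maps. Measurability of $\DM(V)$ follows by composing with the generating evaluations, which reduce to $\rho \mapsto \tr(\rho\, V^*BV)$, and functoriality ($\DM(\id) = \id$ and $\DM(W \circ V) = \DM(W)\circ \DM(V)$) is a routine computation with $V\rho V^*$. To see that $\DM(V)$ is a morphism of Giry-algebras, i.e.\ that $\DM(V) \circ \beta_H = \beta_K \circ \Giry(\DM(V))$, I would pair both sides with an arbitrary effect $B \in \Ef(K)$ and push the conjugation through the integral, using $\tr(V\rho V^* B) = \tr(\rho\, V^*BV)$ together with the fact that $V^*BV \in \Ef(H)$ (its positivity and the bound $V^*BV \leq V^*V = \id$ again use the isometry condition).

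The main obstacle I anticipate is the well-definedness of the barycenter map: showing that the affine functional $A \mapsto \int \tr(\rho A)\dd\varphi(\rho)$ is genuinely represented by a trace-class density matrix requires the full strength of the duality between $\DM(H)$ and $\Ef(H)$, and in the infinite-dimensional case one must argue carefully that the representing operator is positive, self-adjoint, and of trace one rather than merely a bounded functional. Once this representation and the measurability of $\beta_H$ are secured, the verification of the algebra laws and of functoriality is largely formal.
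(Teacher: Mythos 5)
Your proposal is correct and follows essentially the same route as the paper: the algebra structure is the barycenter (weak integral) map on $\DM(H)$, the functor acts on an isometry by conjugation $\rho \mapsto V\rho V^{*}$, and the algebra-homomorphism property is checked by pairing with effects (the paper pairs with vectors $\bra{\psi}(\blank)\ket{\psi}$, i.e.\ rank-one effects) and pushing the conjugation through the integral. The only cosmetic differences are your choice of the $\sigma$-algebra generated by the evaluations $\rho \mapsto \tr(\rho A)$ in place of the Borel $\sigma$-algebra of the weak operator topology (these agree on $\DM(H)$), and your appeal to the effect--state duality for existence of the barycenter where the paper constructs the operator directly from its quadratic form and checks trace one with an orthonormal basis.
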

\begin{proof}
    First we have to endow $\DM(H)$ with the structure of a measurable space.
    The weak operator topology on $\B(H)$ restricts to a subset topology on
    $\DM(H)$. Let $\Sigma_{\DM(H)}$ be the Borel $\sigma$-algebra generated by
    the topology on $\DM(H)$. The resulting measurable space $(\DM(H),
    \Sigma_{\DM(H)})$ is an algebra for the Giry monad with algebra map
    $\alpha : \Giry(\DM(H)) \to \DM(H)$, $\alpha(\varphi) = \int \id \dd
    \varphi$.  Here the integration is defined in such a way that $\bra{\psi}
    \int \id \dd \varphi \ket{\psi} = \int \bra{\psi} (\blank) \ket{\psi} \dd
    \varphi$ for each vector $\psi$.
    The map $\alpha$ is measurable by general facts about integration.  To
    show that the integral is a density matrix, let $(e_k)$ be an orthonormal
    basis for $H$. Then
    \[ \begin{array}{c} \tr \left( \int \id \dd \varphi \right)
        =  \sum_k \bra{e_k} \int \id \dd \varphi \ket{e_k} 
        =  \sum_k \int \bra{e_k} (\blank) \ket{e_k} \dd \varphi 
        =  \int \sum_k \bra{e_k} (\blank) \ket{e_k} \dd \varphi 
        =  \int 1 \dd \varphi = 1.
    \end{array} \]
    Proving that $\alpha$ is an Eilenberg-Moore algebra is straightforward.

    If $f : H \to K$ is an isometry between Hilbert spaces, then $f$ induces a
    map $\DM(f) : \DM(H) \to \DM(K)$ via conjugation, i.e. $\DM(f)(\rho) = f
    \circ \rho \circ f^{\dag}$. The resulting map is a $\Giry$-algebra
    homomorphism. Before proving this, we first remark that conjugation
    commutes with integration in the sense that $\int (f (\blank) f^\dag) \dd
    \varphi = f \left( \int (\blank) \dd \varphi \right) f^\dag$. This follows
    because for each vector $\psi$ we have
    \[
    \begin{array}{rcl}
        \bra{\psi} \int f (\blank) f^\dag \dd \varphi \ket{\psi}
        &=& \int \bra{\psi} f (\blank) f^\dag \ket{\psi} \dd \varphi \\
        &=& \int \bra{f^\dag \psi} (\blank) \ket{f^\dag \psi} \dd \varphi \\
        &=& \bra{f^\dag \psi} \int (\blank) \dd \varphi \ket{f^\dag \psi} \\
        &=& \bra{\psi} f ( \int (\blank) \dd \varphi ) f^\dag \ket{\psi}
    \end{array}
    \]
    Using this we can show that $\DM(f)$ is a $\Giry$-algebra homomorphism:
    \[
    \begin{array}{rcl}
        (\alpha \circ \Giry(\DM(f)))(\varphi)
        &=& \int \id \dd \Giry(\DM(f))(\varphi) \\
        &=& \int \DM(f) \dd \varphi \\
        &=& \int f (\blank) f^\dag \dd \varphi \\
        &=& f \left( \int (\blank) \dd \varphi \right) f^{\dag} \\
        &=& \DM(f) \left( \int \id \dd \varphi \right) \\
        &=& (\DM(f) \circ \alpha) (\varphi) 
    \end{array} \]
    This shows that $\DM$ is a well-defined functor.
\end{proof}

The collections of density matrices and effects on a Hilbert space are related
via the adjunction \eqref{EqAdjGiryEMod}, just like in the discrete
probabilistic case.

\begin{proposition}
    \label{PropEffectStateDuality}
    Fix a Hilbert space $H$. Then:
    \begin{enumerate}
        \item The $\Giry$-algebras $\sEMod(\Ef(H), [0,1])$ and $\DM(H)$ are
            isomorphic.
        \item The $\sigma$-effect modules $\Alg(\Giry)(\DM(H), [0,1])$ and
            $\Ef(H)$ are isomorphic.
    \end{enumerate}
\end{proposition}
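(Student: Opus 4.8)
The plan is to realise both isomorphisms concretely through the trace pairing $(\rho, A) \mapsto \tr(\rho A)$, and to separate the work into two parts: checking that the relevant maps preserve the algebraic structure (which is routine) and checking that they are bijections (where the substantive content lies). For the first claim I would define $\Theta : \DM(H) \to \sEMod(\Ef(H),[0,1])$ by $\Theta(\rho) = \bigl(A \mapsto \tr(\rho A)\bigr)$, and for the second claim the map $\Xi : \Ef(H) \to \Alg(\Giry)(\DM(H),[0,1])$ by $\Xi(A) = \bigl(\rho \mapsto \tr(\rho A)\bigr)$. That $\Theta(\rho)$ lands in $\sEMod(\Ef(H),[0,1])$ is immediate: since $0 \leq A \leq \id$ and $\tr(\rho) = 1$ one has $\tr(\rho A) \in [0,1]$, additivity together with the orthocomplement and scalar laws follow from linearity of the trace, and preservation of countable joins of chains follows from monotone convergence applied to $\tr(\rho\, \blank)$. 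Injectivity of both $\Theta$ and $\Xi$ reduces to nondegeneracy of the trace pairing; in particular, evaluating against the rank-one projections $\ket{\psi}\bra{\psi}$ recovers all matrix elements $\bra{\psi} A \ket{\psi}$.

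For claim (1) the crux is surjectivity of $\Theta$. A morphism $f : \Ef(H) \to [0,1]$ of $\sigma$-effect modules is exactly a normalised, countably additive generalised probability measure on the effects: it sends $\id = 0^{\bot}$ to $1$, is finitely additive because it preserves $\oplus$, and is countably additive because an orthogonal sum $\bigoplus_i A_i$ is the join of the chain of its partial sums, which $f$ preserves. The statement that every such $f$ is of the form $A \mapsto \tr(\rho A)$ for a unique density operator $\rho$ is precisely Busch's version of Gleason's theorem, valid in every dimension. I would invoke this result to produce the inverse of $\Theta$; this is the main obstacle, as it is the one genuinely nontrivial input.

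For claim (2) the crux is surjectivity of $\Xi$. A $\Giry$-algebra morphism $g : \DM(H) \to [0,1]$ is a measurable map preserving barycenters, hence in particular affine: $g\bigl(\sum_i \lambda_i \rho_i\bigr) = \sum_i \lambda_i g(\rho_i)$ for finite convex combinations. Writing an arbitrary self-adjoint trace-class operator $T$ through its Jordan decomposition as $\lambda_+ \rho_+ - \lambda_- \rho_-$ with $\rho_\pm \in \DM(H)$ and $\lambda_\pm \geq 0$, I would set $\tilde g(T) = \lambda_+ g(\rho_+) - \lambda_- g(\rho_-)$; affineness of $g$ makes this well defined and linear, and the bound $g(\DM(H)) \subseteq [0,1]$ gives $|\tilde g(T)| \leq \lambda_+ + \lambda_- = \|T\|_1$. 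Thus $\tilde g$ extends, after complexification, to a bounded linear functional on $\TC(H)$, and the duality $\TC(H)^* \cong \B(H)$ via $T \mapsto \tr(T A)$ yields a self-adjoint operator $A$ with $g(\rho) = \tr(\rho A)$. Evaluating on pure states forces $0 \leq A \leq \id$, so $A \in \Ef(H)$ and $g = \Xi(A)$. The appeal to trace-class duality is the main obstacle here.

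It remains to see that $\Theta$ and $\Xi$ are isomorphisms in the respective categories, not merely bijections. For $\Xi$ this means checking it is a morphism of $\sigma$-effect modules with respect to the pointwise operations on $\Alg(\Giry)(\DM(H),[0,1])$, which is immediate from linearity of the trace, together with the fact that each $\Xi(A)$ is a genuine $\Giry$-algebra morphism; the latter is the barycenter identity $g_A(\alpha(\varphi)) = \tr\bigl((\int \id \dd \varphi) A\bigr) = \int \tr(\rho A) \dd \varphi = \int g_A \dd \varphi$, which is exactly the defining property of $\int \id \dd \varphi$ established in Lemma~\ref{LemDMfunctor}. For $\Theta$ one checks that it intertwines the evaluation maps $\mathrm{ev}_A$ on $\sEMod(\Ef(H),[0,1])$ with the measurable maps $\rho \mapsto \tr(\rho A)$ on $\DM(H)$; since these evaluations generate both the measurable structure and the $\Giry$-algebra map on $\sEMod(\Ef(H),[0,1])$, the same identity shows $\Theta$ is a measurable $\Giry$-algebra isomorphism. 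Taken together, the two claims say that $\Ef(H)$ and $\DM(H)$ form a dual pair for the adjunction \eqref{EqAdjGiryEMod}.
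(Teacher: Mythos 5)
Your proof is correct and takes essentially the same route as the paper: part (1) is exactly Busch's theorem, and part (2) rests on the observation that a $\Giry$-algebra morphism $\DM(H) \to [0,1]$ is affine, after which the representation by an effect follows from trace-class duality $\TC(H)^* \cong \B(H)$. The paper simply cites Busch for the first input and Jacobs--Mandemaker for the second, whereas you reconstruct the second (Jordan decomposition, well-definedness via affineness, the $\|\cdot\|_1$ bound) explicitly and additionally verify that the bijections are morphisms in the respective categories.
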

\begin{proof}\mbox{}
    \begin{enumerate}
        \item This is a reformulation of Busch's theorem in \cite{Busch03}.
        \item In \cite{JacobsM12c} this result is proven for affine maps
            $\DM(H) \to [0,1]$ instead of $\Giry$-algebra maps, so the
            statement follows because every $\Giry$-algebra map is in
            particular affine.  \qedhere

   \end{enumerate}
\end{proof}

Since $[0,1] \cong \Giry(2)$, measurable maps into $[0,1]$ are the same as
morphisms into $2$ in the Kleisli category $\Kl(\Giry)$.
The following diagram summarizes the relations between the logic of measurable
spaces and the logic of Hilbert spaces.

\[
\xymatrix@R+1pc { & \Kl(\Giry) \ar[dl]_{\mathcal{K}}
\ar[dr]^{\HomKl(\blank,2)} \\
\Alg(\Giry) \ar@/^/[rr]^{\Hom(\blank, [0,1])} & \bot &
\sEMod\op \ar@/^/[ll]^{\Hom(\blank, [0,1])} \\
& \Hilbisomet \ar[ul]^{\DM} \ar[ur]_{\Ef} }
\]
The functor $\mathcal{K}$ is the comparison functor sending an object $X \in
\Kl(\Giry)$ to the free algebra $\Giry(X)$.
In this setting we can consider the comma categories
$\comma{\DM}{\mathcal{K}}$ and $\comma{\Ef}{\HomKl(\blank,2)}$. An object of the
category $\comma{\DM}{\mathcal{K}}$ is a
map of the form $\DM(H) \to \Giry(X)$. A morphism from $ \alpha :
\DM(H) \to \Giry(X)$ to $\beta : \DM(K) \to \Giry(Y)$ is a commutative diagram
\begin{equation}
    \label{EqMorphismDMGiry}
    \begin{gathered}
        \xymatrix { \DM(H) \ar[r]^{\DM(g)} \ar[d]_\alpha &
        \DM(K) \ar[d]^\beta \\
        \Giry(X) \ar[r]_{\mathcal{K}(f)} & \Giry(Y) }
    \end{gathered}
\end{equation}
where $f: X \to \Giry(Y)$ is a measurable map and $g : H \to K$ is an
isometry. Since the functors $\Ef$ and $\Pred$ have the opposite of $\sEMod$
as codomain, an object of $\comma{\Ef}{\Pred}$ is a morphism $\HomKl(X,2) \to
\Ef(H)$ in $\sEMod$,
that is, a POVM. A morphism between two POVMs $ A : \HomKl(X,2) \to \Ef(H)$ and
$ B : \HomKl(Y,2) \to \Ef(K)$ is given by a diagram
\begin{equation}
    \label{EqMorphismEfSigma}
    \begin{gathered}
        \xymatrix { \HomKl(X,2) \ar[d]_A &
        \HomKl(Y,2) \ar[l]_{(\blank)\circ f} \ar[d]^B \\
        \Ef(H) & \Ef(K) \ar[l]^{\Ef(g)} }
    \end{gathered}
\end{equation}
in $\sEMod$, for a measurable map $f : X \to \Giry(Y)$ and an isometry $g : H
\to K$.

In \cite{HeinosaariZ12} it is shown that there is a correspondence between
POVMs and $\Giry$-algebra homomorphisms $\DM(H) \to \Giry(X)$, called
statistical maps. From a categorical perspective, this can be phrased as an
equivalence between comma categories as follows.

\begin{proposition}
    The categories $\comma{\DM}{\Giry}$ and $\comma{\Ef}{\Pred}$ are
    equivalent.
\end{proposition}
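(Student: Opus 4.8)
The plan is to read off the equivalence from the dual adjunction \eqref{EqAdjGiryEMod}. Write $F = \Hom(\blank,[0,1]) : \sEMod\op \to \Alg(\Giry)$ and $G = \Hom(\blank,[0,1]) : \Alg(\Giry) \to \sEMod\op$ for its two legs. The key observation is that Proposition~\ref{PropEffectStateDuality} exhibits $\DM(H)$ and $\Ef(H)$ as corresponding \emph{fixed objects} of this adjunction: part~(2) gives $G(\DM(H)) = \Alg(\Giry)(\DM(H),[0,1]) \cong \Ef(H)$, while part~(1) gives $F(\Ef(H)) = \sEMod(\Ef(H),[0,1]) \cong \DM(H)$. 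I would then check that the comparison functor $\mathcal{K}$ and the predicate functor $\Pred$ behave identically. Since $\mathcal{K}(X) = \Giry(X)$ is the free $\Giry$-algebra on $X$, we get $G(\mathcal{K}(X)) = \Alg(\Giry)(\Giry(X),[0,1]) \cong \Meas(X,[0,1]) = \HomKl(X,2)$; and since $\Meas(X,[0,1])$ is the free $\sigma$-effect module on $\Sigma_X$, we get $F(\HomKl(X,2)) = \sEMod(\Meas(X,[0,1]),[0,1]) \cong \sEA(\Sigma_X,[0,1]) = \Giry(X) = \mathcal{K}(X)$, the penultimate step being the definition of the Giry monad. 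Hence $\mathcal{K}(X)$ and $\HomKl(X,2)$ are again corresponding fixed objects.

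Next I would invoke the standard fact that an adjunction restricts to an equivalence between the full subcategories of objects on which the unit, respectively the counit, is an isomorphism. The computations above place all the objects occurring in our two comma categories --- the $\DM(H)$ and $\mathcal{K}(X)$ inside $\Alg(\Giry)$, and the $\Ef(H)$ and $\HomKl(X,2)$ inside $\sEMod\op$ --- among these fixed objects, and they show that $G$ carries the first family to the second, with $F$ as inverse. In particular $G$ is fully faithful on the full subcategory of $\Alg(\Giry)$ spanned by the $\DM(H)$ and $\mathcal{K}(X)$, and there are natural isomorphisms $G\DM \cong \Ef$ and $G\mathcal{K} \cong \Pred$.

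With this machinery I would define the comparison $\Phi : \comma{\DM}{\mathcal{K}} \to \comma{\Ef}{\Pred}$ by applying $G$ and transporting along these natural isomorphisms. An object $\alpha : \DM(H) \to \mathcal{K}(X)$ is sent to $G(\alpha)$, reread as a morphism $\Ef(H) \to \HomKl(X,2)$ of $\sEMod\op$, that is, a POVM $\HomKl(X,2) \to \Ef(H)$; by Lemma~\ref{LemPOVMEMod} this is exactly an object of $\comma{\Ef}{\Pred}$. A morphism of $\comma{\DM}{\mathcal{K}}$, i.e. a commuting square \eqref{EqMorphismDMGiry} with edges $\DM(g)$ and $\mathcal{K}(f)$, is sent by $G$ to a commuting square whose edges become $\Ef(g)$ and $(\blank)\circ f$ under $G\DM\cong\Ef$ and $G\mathcal{K}\cong\Pred$, which is precisely the shape \eqref{EqMorphismEfSigma}. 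The quasi-inverse $\Psi$ is built in the same way from $F$, and $\Phi$ and $\Psi$ are mutually inverse up to isomorphism because the unit and counit of \eqref{EqAdjGiryEMod} are invertible on all objects in sight.

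I expect the main obstacle to be the naturality bookkeeping rather than the conceptual structure. Proposition~\ref{PropEffectStateDuality} is stated for a single Hilbert space, so I would first have to promote the object-wise isomorphisms to natural isomorphisms $G\DM \cong \Ef$ and $F\Ef \cong \DM$ compatible with an arbitrary isometry $g : H \to K$, and likewise make $G\mathcal{K} \cong \Pred$ and $F\Pred \cong \mathcal{K}$ natural in the Kleisli morphism $f : X \to \Giry(Y)$; equivalently, one verifies that the unit and counit of \eqref{EqAdjGiryEMod} really restrict to isomorphisms on these objects, and not merely that abstract isomorphisms exist. Once this is in place, the matching of the squares \eqref{EqMorphismDMGiry} and \eqref{EqMorphismEfSigma} is automatic, and the desired equivalence drops out of the equivalence between the two fixed-object subcategories.
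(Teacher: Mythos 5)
Your proof is correct and takes essentially the same route as the paper's: both read the object-level bijection off the dual adjunction \eqref{EqAdjGiryEMod} together with the freeness of $\Meas(X,[0,1])$ over $\Sigma_X$ and Proposition~\ref{PropEffectStateDuality}, and both reduce the morphism part to checking that the squares \eqref{EqMorphismDMGiry} and \eqref{EqMorphismEfSigma} correspond under the transposition. The only difference is packaging: the paper transposes directly along the hom-set bijection, which needs just the two isomorphisms $\mathcal{K}(X)\cong\sEMod(\HomKl(X,2),[0,1])$ and $\Alg(\Giry)(\DM(H),[0,1])\cong\Ef(H)$, whereas your fixed-subcategory formulation additionally requires the unit and counit to be isomorphisms at all four families of objects --- true here, but strictly more to verify, as you correctly flag in your final paragraph.
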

\begin{proof}
    An object of $\comma{\DM}{\Giry}$ is the same as a morphism $\DM(H)
    \to \sEMod(\HomKl(X,2),[0,1])$ since $\HomKl(X,2) \cong
    \Meas(X,[0,1])$ is the free $\sigma$-effect module on $\Sigma_X$. By the
    adjunction \eqref{EqAdjGiryEMod} and
    Proposition~\ref{PropEffectStateDuality}, this corresponds to a POVM.
    For morphisms, let $f : X
    \to \Giry(Y)$ be a measurable map, $g: H \to K$ an isometry, and $\alpha :
    \DM(H) \to \Giry(X)$ and $\beta : \DM(K) \to \Giry(Y)$ two statistical
    maps.  Then the diagram \eqref{EqMorphismDMGiry} commutes if and only if
    the corresponding diagram \eqref{EqMorphismEfSigma} commutes.
\end{proof}

\section{Sequential composition}\label{SecSeqComposition}

Suppose that we want to test two properties of a physical system sequentially.
If the properties are modeled by effects $A$ and $B$, then the composite test
corresponds to the effect $\sqrt{A} B \sqrt{A}$, which is called the
sequential product of $A$ and $B$. The properties of this operation are
studied in \cite{GudderN01,GudderG02,Gudder10}. We will now define an
extension of this operation to POVMs, which can be used if we want to measure
two POVMs sequentially. We start by measuring a POVM $A: \Sigma_X \to \Ef(H)$.
The outcome of this measurement is a value $x\in X$. The second POVM may
depend on the outcome of the first measurement, so we assume that we have a
family of measurable spaces $(Y_x)$ indexed by $x\in X$ with a family of POVMs
$B = (B_x : \Sigma_{Y_x} \to \Ef(H))$. We wish to define a POVM representing
the total experiment. For this we need the additional assumptions that the
measurable space $X$ is equipped with a finite measure $\mu : \Sigma_X \to
\mathbb{R}$, and that $A$ has a Radon-Nikodym derivative with respect to
$\mu$. Recall that a Radon-Nikodym derivative of $A$ with respect to $\mu$ is
a function $\der{A}{\mu} : X \to \Pos(H)$ for which $\int_M \der{A}{\mu} \dd
\mu = A(M)$ for each measurable subset $M\subseteq X$. Here $\Pos(H)$ denotes
the set of positive operators on $H$. The derivative, if it exists, is unique
up to equality almost everywhere. Conditions for existence are discussed in
e.g. \cite{DiestelU77}. Here we will only briefly state the result that we
need for the remainder of this paper. The POVM $A$ is called $\mu$-continuous
if $A(M) = 0$ whenever $\mu(M) = 0$. It has bounded variation if
\[ \sup \sum_{i=1}^n || \varphi(X_i) || < \infty, \]
where the supremum is taken over all finite partitions $X = \bigcup_{i=1}^n
X_i$ of the space $X$. If the Hilbert space $H$ is finite-dimensional, then
the POVM $A$ has a derivative if and only if it is $\mu$-continuous and has
bounded variation, because $\B(H)$ has the Radon-Nikodym property.

Under the assumption that $A$ has a Radon-Nikodym derivative $\der{A}{\mu}$,
we can define the sequential composition of the POVM $A$ and the
family $B$. The total outcome of the experiment consists of a value $x\in X$
together with a value $y \in Y_x$, so our outcome space is $Y = \bigcup_{x\in
X} Y_x$. The union carries a natural $\sigma$-algebra generated by
$\bigcup_{x\in M} N_x$, where $M$ is a measurable subset of $X$ and each $N_x$
is a measurable subset of $Y_x$. Define the sequential composition by
\[ ( A;B ) : \Sigma_Y \to \Ef(H) \]
\[ ( A;B ) \left( \bigcup_{x\in M} N_x \right)
= \int_M \left( \sqrt{\frac{\dd A}{\dd \mu}(x)}
B_x(N_x) \sqrt{\frac{\dd A}{\dd \mu}(x)} \right) \dd x \] 

\begin{lemma}
    The sequential composition $(A;B)$ is a POVM.
    \label{LemSeqComposition}
\end{lemma}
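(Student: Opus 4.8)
The plan is to verify directly that $(A;B)$ satisfies the defining properties of a morphism $\Sigma_Y \to \Ef(H)$ of $\sigma$-effect algebras: that it lands in $\Ef(H)$, preserves $0$, is $\sigma$-additive, and preserves the top element (from which preservation of the orthocomplement follows). Since $(A;B)$ is only prescribed on the generating sets $\bigcup_{x\in M} N_x$, the first task is to show that this prescription is well-defined and extends consistently to all of $\Sigma_Y$.

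First I would fix a generator $S = \bigcup_{x\in M} N_x$ and check that the integrand $x \mapsto \sqrt{\der{A}{\mu}(x)}\, B_x(N_x) \sqrt{\der{A}{\mu}(x)}$ is weakly measurable and integrable, so that the defining integral exists as an element of $\B(H)$. Weak measurability follows from measurability of $x \mapsto \der{A}{\mu}(x)$ together with that of the family $x \mapsto B_x(N_x)$; integrability is automatic once I observe that pointwise $0 \le \sqrt{\der{A}{\mu}(x)}\, B_x(N_x)\sqrt{\der{A}{\mu}(x)} \le \der{A}{\mu}(x)$, using $B_x(N_x) \le \id$ and the elementary fact that $\inprod{\psi}{\sqrt{C}D\sqrt{C}\,\psi} = \inprod{\sqrt{C}\psi}{D\sqrt{C}\psi} \ge 0$ for positive $C,D$. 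The same pointwise bound shows the integral is a positive operator, and taking $M=X$, $N_x = Y_x$ (so that $B_x(Y_x) = \id$) gives $(A;B)(Y) = \int_X \der{A}{\mu}(x)\dd x = A(X) = \id$, so every value lies in $\Ef(H)$. Well-definedness independent of the chosen representation of $S$ follows from $\mu$-almost-everywhere uniqueness of $\der{A}{\mu}$ together with additivity of the $B_x$.

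Next I would check the algebraic laws on generators. Preservation of $0$ is immediate by taking $M = \emptyset$. For finite and countable additivity, if $S^{(1)}, S^{(2)}, \dots$ are pairwise disjoint generators then their slices over each $x$ are disjoint in $Y_x$, so $\sigma$-additivity of the POVM $B_x$ gives $B_x\big(\bigcup_i S^{(i)}_x\big) = \bigoplus_i B_x(S^{(i)}_x)$; conjugating by $\sqrt{\der{A}{\mu}(x)}$ and integrating, using monotone convergence for the resulting increasing sequence of scalar integrals $\int_X \inprod{\psi}{(\blank)\,\psi}\dd x$, yields $(A;B)\big(\bigcup_i S^{(i)}\big) = \bigoplus_i (A;B)(S^{(i)})$. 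Preservation of the orthocomplement is then a consequence of additivity together with $(A;B)(Y)=\id$, by uniqueness of orthocomplements.

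The last step, and the one I expect to be the main obstacle, is passing from generators to arbitrary elements of $\Sigma_Y$. The generating sets form a $\pi$-system, since $\big(\bigcup_{x\in M} N_x\big) \cap \big(\bigcup_{x\in M'} N'_x\big) = \bigcup_{x \in M\cap M'} (N_x \cap N'_x)$, so a Dynkin (monotone class) argument shows that a $\sigma$-additive extension, if it exists, is unique. The real difficulty is measurability: for a general $S \in \Sigma_Y$ I must know that each slice $S_x = S \cap Y_x$ is measurable in $Y_x$ and that $x \mapsto B_x(S_x)$ is weakly measurable, so that $\int_X \sqrt{\der{A}{\mu}(x)}\,B_x(S_x)\sqrt{\der{A}{\mu}(x)}\dd x$ is defined and furnishes the extension. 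I would establish this by showing that the collection of $S$ for which the slice map is measurable contains the generators and is closed under complements and countable disjoint unions, hence forms a Dynkin system containing a generating $\pi$-system and therefore all of $\Sigma_Y$. This measurable-family bookkeeping over the varying fibers $Y_x$ is where the essential work lies; the effect-algebra axioms themselves then follow from the computations above.
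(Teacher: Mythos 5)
Your proof is correct and, on the core algebraic step, follows the same route as the paper: both arguments reduce additivity of $(A;B)$ on sets of the form $\bigcup_{x\in M}N_x$ to additivity of each $B_x$ on the fibers plus linearity of the integral, and both obtain the effect property from positivity together with $(A;B)(Y)=\id$. Where you genuinely diverge is in scope: the paper's proof verifies only \emph{finite} additivity on the generating sets, asserts unit preservation, and stops there, leaving implicit the well-definedness of the integral, the passage from generators to all of $\Sigma_Y$, and countable additivity. You treat all of these: the pointwise operator inequality $0\le \sqrt{C}D\sqrt{C}\le C$ for $D\le\id$ gives integrability and the bound by $\der{A}{\mu}$ in one stroke (cleaner than the paper's appeal to monotonicity via $(A;B)(S)\le(A;B)(Y)$, which itself presupposes additivity on non-generating sets); the observation that the generators form a $\pi$-system plus a Dynkin-system argument handles both uniqueness of the extension and the measurability of $x\mapsto B_x(S_x)$ for general $S\in\Sigma_Y$; and monotone convergence upgrades finite to $\sigma$-additivity, which is what the definition of a morphism of $\sigma$-effect algebras actually requires. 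The one point to make explicit if you write this up in full is the standing hypothesis that the family $(B_x)$ is weakly measurable in $x$ on generators (the paper assumes this silently too, since otherwise the defining integral does not parse), and that bounded increasing nets of positive operators have weak-operator suprema, which identifies your monotone limit with the join in $\Ef(H)$. With those remarks your argument is a complete and strictly stronger version of the paper's.
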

\begin{proof}
    Suppose that the measurable sets $\bigcup_{x\in M} N_x$ and $\bigcup_{x\in
    M'} N'_x$ are disjoint. Then their union can be written as
    \[ \bigcup_{x\in M \backslash M'} N_x \cup
    \bigcup_{x\in M' \backslash M} N'_x \cup
    \bigcup_{x\in M \cap M'} (N_x \cup N'_x), \]
    where $N_x$ and $N'_x$ are disjoint whenever both are defined. Applying the
    map $A;B$ gives
    \[ \begin{array}{rcl}
        (A;B)(\bigcup_{x\in M} N_x \cup \bigcup_{x\in M'} N'_x)
        &=& \int_{M \backslash M'} \sqrt{\frac{\dd A}{\dd
        \mu}(x)} B_x(N_x) \sqrt{\frac{\dd A}{\dd \mu}(x)} \dd x \\
        & & + \int_{M' \backslash M} \sqrt{\frac{\dd A}{\dd
        \mu}(x)} B_x(N'_x) \sqrt{\frac{\dd A}{\dd \mu}(x)} \dd x \\
        & & + \int_{M \cap M'} \sqrt{\frac{\dd A}{\dd \mu}(x)} B_x(N_x \cup
        N'_x) \sqrt{\frac{\dd A}{\dd \mu}(x)} \dd x \\
        &=& \int_{M \backslash M'} \sqrt{\frac{\dd A}{\dd
        \mu}(x)} B_x(N_x) \sqrt{\frac{\dd A}{\dd \mu}(x)} \dd x \\
        & & + \int_{M' \backslash M} \sqrt{\frac{\dd A}{\dd
        \mu}(x)} B_x(N'_x) \sqrt{\frac{\dd A}{\dd \mu}(x)} \dd x \\
        & & + \int_{M \cap M'} \sqrt{\frac{\dd A}{\dd \mu}(x)} B_x(N_x)
        \sqrt{\frac{\dd A}{\dd \mu}(x)} \dd x \\
        & & + \int_{M \cap M'} \sqrt{\frac{\dd A}{\dd \mu}(x)} B_x(N'_x)
        \sqrt{\frac{\dd A}{\dd \mu}(x)} \dd x \\
        &=& \int_M \sqrt{\frac{\dd A}{\dd \mu}(x)} B_x(N_x) \sqrt{\frac{\dd
        A}{\dd \mu}(x)} \dd x \\
        & & + \int_{M'} \sqrt{\frac{\dd A}{\dd \mu}(x)} B_x(N'_x)
        \sqrt{\frac{\dd A}{\dd \mu}(x)} \dd x \\
        &=& (A;B)(\bigcup_{x\in M} N_x) + (A;B)(\bigcup_{x\in M'} N'_x)
    \end{array} \]
    Hence the map $A;B$ is additive. It is not hard to check that it preserves
    the unit.
    Finally, each operator $(A;B)(\bigcup_{x \in M} N_x)$ is positive, and
    lies below the identity because $(A;B)(\bigcup_{x \in M} N_x) \leq
    (A;B)(Y) = \id$. Thus $A;B$ is a POVM. 
\end{proof}

\begin{example}
    We apply the above construction to the spin example from
    \cite{HeinosaariZ12}.
    Consider a system consisting of one spin-$\frac{1}{2}$ particle, modeled
    as the Hilbert space $\C^2$. The direction of the spin has a value in the
    unit sphere $\S^2$, and is given by the POVM 
    \[ D : \Sigma_{\S^2} \to \Ef(\C^2) \]
    \[ D(M) = \frac{1}{4\pi} \int_M (\id + \vec{n} \cdot
    \vec{\sigma}) \dd \vec{n} \]
    Here $\dd \vec{n}$ is the usual measure on the unit sphere, and
    $\vec{\sigma} = (\sigma_x, \sigma_y, \sigma_z)$ is the vector consisting
    of the Pauli matrices:
    \[ \sigma_x = \left(
    \begin{array}{cc}
        0 & 1 \\ 1 & 0
    \end{array}
    \right), \qquad
    \sigma_y = \left(
    \begin{array}{cc}
        0 & -i \\
        i & 0 
    \end{array}
    \right), \qquad
    \sigma_z = \left(
    \begin{array}{cc}
        1 & 0 \\ 0 & -1
    \end{array}
    \right). \]

    If we pick a direction $\vec{n} \in \S^2$, then we can also measure the
    spin component along the direction $\vec{n}$. This measurement has two
    possible outcomes, which we label by $+$ and $-$. The corresponding POVM
    is $S_{\vec{n}} : \Sigma_{\{\pm\}} \to \Ef(\C^2)$, defined by
    $S_{\vec{n}}(\{\pm\}) = \frac{1}{2}(\id \pm \vec{n} \cdot
    \vec{\sigma})$. Physically, the probability that the outcome is $+$
    indicates how close the actual spin direction of the particle is to
    $\vec{n}$.

    We perform the following experiment on the system. First we measure the
    spin direction, which has outcome $\vec{n}$. Then we measure the spin
    component along this direction, i.e. we perform the measurement
    $S_{\vec{n}}$. Since the spin direction of the particle is in this
    situation equal to the measurement direction, we expect that the second
    measurement always gives outcome $+$. The outcome space of the composite
    measurement $D;S$ is $\bigcup_{\vec{n} \in \S^2} \{ \pm \} \cong \S^2
    \times \{ \pm \}$.
    According to the physical interpretation, this composite measurement is
    determined by
    \[ (D ; S)(M\times \{-\}) = 0 \]
    \[ (D ; S)(M\times \{+\}) = D(M) \]

    We can also verify this using the sequential composition formula.
    From the definition of the POVM $D$ it is immediate that its Radon-Nikodym
    derivative is
    \[ \frac{\dd D}{\dd \vec{n}}\left( \vec{n} \right) = \frac{1}{4\pi}
    (\id + \vec{n} \cdot \vec{\sigma}). \]
    Then the `minus' case of the sequential composition formula becomes:
    \[ \begin{array}{rcl}
        (D;S)(M \times \{-\}) &=& \int_M \sqrt{\frac{\dd D}{\dd \vec{n}}
        (\vec{n})} S_{\vec{n}}(\{-\}) \sqrt{\frac{\dd D}{\dd \vec{n}}(\vec{n})}
        \dd \vec{n} \\
        &=& \frac{1}{8\pi} \int_M \left( \sqrt{\id + \vec{n} \cdot
        \vec{\sigma}} \left( \id - \vec{n} \cdot \vec{\sigma} \right)
        \sqrt{\id + \vec{n} \cdot \vec{\sigma}} \right) \dd \vec{n} \\
        &=& \frac{1}{8\pi} \int_M (\id + \vec{n} \cdot \vec{\sigma})
        (\id - \vec{n} \cdot \vec{\sigma}) \dd \vec{n} \\
        &=& \frac{1}{8\pi} \int_M (\id - (\vec{n} \cdot \vec{\sigma})^2)
        \dd \vec{n}
    \end{array} \]
    For the third equality sign, we used that a square root $\sqrt{A}$
    commutes with every operator that commutes with $A$. A well-known
    property of the Pauli matrices is that $(\vec{n} \cdot \vec{\sigma})^2 =
    \id$ for each unit vector $\vec{n}$. From this it follows that $(D;S)(M
    \times \{-\}) = 0$. An analogous computation shows that $(D;S)(M \times
    \{+\}) = D(M)$.
    \label{ExSpinMeasurement}
\end{example}

\section{Characterization of continuous POVMs}\label{SecContinuousPOVM}

In Section~\ref{SecSeqComposition} we saw that we need continuity conditions
on POVMs in order to define sequential composition. Therefore we will now
study continuous POVMs in more detail and provide a few equivalent
characterizations. It will turn out that in the continuous case von Neumann
algebras form a more natural setting than effect algebras. Our main examples
of von Neumann algebras are constructed from Hilbert spaces and measure
spaces. If $H$ is a Hilbert space, then $\B(H)$ will denote the von Neumann
algebra of bounded linear operators on $H$. Recall that a measure space is a
measurable space together with a measure. For a measure space $(X,\mu)$, let
$\Linf(X,\mu)$ be the algebra of $\mu$-essentially bounded functions from $X$
to $\C$, modulo equality almost everywhere. We will assume throughout this
section that $X$ arises from a compact Hausdorff space and that $\mu(X)$ is
finite.

The duality for non-continuous POVMs boiled down to the duality between states
and effects. For continuous POVMs we will replace this by the interplay
between a von Neumann algebra and its normal states, or its predual. To
describe this in more detail, we will use several categories of von Neumann
algebras. The standard notion of morphism between C*-algebras is a
$*$-homomorphism, which is a bounded linear map preserving multiplication,
unit, and involution.  For von Neumann algebras we usually impose an
additional condition: a map between von Neumann algebras is called normal if
it preserves joins of countable increasing chains. This is equivalent to
preservation of countable sums of orthogonal projections, see e.g.
\cite{KadisonR83} for details. The category of unital von Neumann algebras
with normal $*$-homomorphisms will be denoted $\Cstar$. Sometimes it is more
appropriate to use a weaker notion of morphism. The category with von Neumann
algebras as objects and normal linear maps preserving positivity and the unit
as morphisms is denoted $\CstarPU$.

The predual $A_{\#}$ of a von Neumann algebra $A$ consists of all normal
linear functionals from $A$ to $\C$. If $A$ is unital, then the predual is
equipped with a canonical trace map $\tau : A_{\#} \to \C$, given by
evaluation at the unit. For example, the predual of $\B(H)$ is the collection
of trace-class operators $\TC(H)$, and the canonical trace map is the ordinary
trace $\tr : \TC(H) \to \C$. The predual of $\Linf(X,\mu)$ is $L^1(X,\mu)$,
i.e. the measurable functions $f: X \to \C$ such that the integral $\int_X |f|
\dd \mu$ is finite. In this case, the trace map is integration $\int_X
(\blank) \dd \mu$.

The structure of a predual can be captured abstractly by base norm spaces, see
e.g. \cite{Alfsen71,Nagel74}. Let $V$ be an ordered vector space, and $\tau :
V \to \C$ a positive linear functional. A convex subset $C$
of $V$ is called linearly bounded if $C \cap L$ is bounded for every line $L$
through the origin. Let $K = \tau^{-1}(1) \subseteq V$; the pair $(V,\tau)$ is
said to be a base norm space if the convex hull of $K \cup -K$ is linearly
bounded. Base norm spaces form a category $\BNS$ in which a morphism from
$(V,\tau)$ to $(V',\tau')$ is a positive linear map $f : V \to V'$ for which
$\tau' \circ f = \tau$.

The following result shows how to view continuous POVMs as morphisms between
von Neumann algebras.

\begin{proposition}
    There is a bijective correspondence between:
    \begin{itemize}
        \item POVMs $\Sigma_X \to \Ef(H)$ that are $\mu$-continuous and have
            bounded variation;
        \item Normal positive unital maps $\Linf(X,\mu) \to \B(H)$.
    \end{itemize}
    \label{PropBijectionPOVMCStar}
\end{proposition}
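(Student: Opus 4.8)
The plan is to write down the two maps explicitly and verify they are mutually inverse, isolating the analytic content in a monotone-convergence argument (for normality) and a single variation estimate. Starting from a $\mu$-continuous POVM $A : \Sigma_X \to \Ef(H)$ of bounded variation, I would define $\Phi_A : \Linf(X,\mu) \to \B(H)$ by integration against $A$, so that on a simple function $\sum_i c_i \indicator{M_i}$ with disjoint $M_i$ one has $\Phi_A(\sum_i c_i \indicator{M_i}) = \sum_i c_i A(M_i)$; concretely this is $\Phi_A(f) = \int_X f \der{A}{\mu} \dd \mu$ once the derivative is invoked. For $\|f\|_\infty \le 1$ the bounded-variation bound $\sum_i |c_i|\,\|A(M_i)\| \le \sum_i \|A(M_i)\|$ shows $\Phi_A$ is bounded on simple functions, hence extends to all of $\Linf(X,\mu)$, and $\mu$-continuity of $A$ guarantees that $\Phi_A$ annihilates functions vanishing $\mu$-almost everywhere, so it descends to the quotient. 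Positivity and unitality, the latter from $\Phi_A(1) = A(X) = \id$, are then immediate.

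In the reverse direction, given a normal positive unital map $\Phi : \Linf(X,\mu) \to \B(H)$, I would set $A(M) = \Phi(\indicator{M})$. Since $0 \le \indicator{M} \le 1$, positivity and unitality give $0 \le A(M) \le \id$, so $A(M) \in \Ef(H)$, and additivity on disjoint sets follows from linearity of $\Phi$ together with $\indicator{M \cup N} = \indicator{M} + \indicator{N}$. Because $\indicator{M} = 0$ in $\Linf(X,\mu)$ whenever $\mu(M) = 0$, the resulting POVM is automatically $\mu$-continuous. The two normality statements I would handle uniformly: for each vector $\psi$, the assignment $M \mapsto \bra{\psi} A(M) \ket{\psi}$ is a finite positive measure on $X$, and normality of $\Phi$ (preservation of countable increasing joins) translates, via monotone convergence for these scalar measures, into countable additivity of $A$; the same scalar measures show conversely that $\Phi_A$ preserves countable increasing joins and is therefore normal.

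The one genuinely analytic obstacle, and where I expect the real work, is bounded variation of $A$ in the reverse construction: a priori $\sum_i \|A(X_i)\|$ over a finite partition is only controlled by the number of cells, since each term is at most $\|\Phi\| = 1$. Here I would use the trace bound $\|P\| \le \tr P$ valid for positive operators, giving
\[
\sum_i \|A(X_i)\| \le \tr \sum_i A(X_i) = \tr A(X) = \tr \Phi(1) = \tr \id,
\]
which is finite precisely when $H$ is finite-dimensional. This matches exactly the regime in which $\B(H)$ has the Radon--Nikodym property and $\der{A}{\mu}$ exists, and it is the step that accounts for the continuity and bounded-variation hypotheses; a representation such as $f \mapsto M_f$ on $L^2$ shows that normal positive unital maps need not yield bounded variation outside this regime, so this estimate is essential rather than cosmetic.

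Finally I would check the two constructions are inverse. Starting from $A$, one recovers $A(M) = \int_X \indicator{M} \dd A = \Phi_A(\indicator{M})$ directly. Starting from $\Phi$, the maps $\Phi$ and $\Phi_A$ agree on simple functions by linearity, and since both are normal while every element of $\Linf(X,\mu)$ is an increasing almost-everywhere limit of simple functions, they agree everywhere. Thus $A \mapsto \Phi_A$ and $\Phi \mapsto (M \mapsto \Phi(\indicator{M}))$ are mutually inverse bijections, establishing the correspondence.
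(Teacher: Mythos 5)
Your two constructions are the same as the paper's: integration against the POVM in one direction and evaluation at indicator functions in the other, with well-definedness on the $\Linf$-quotient, positivity, unitality, $\mu$-continuity, and normality handled in essentially the same way (the paper checks normality directly from the $\sigma$-effect-algebra structure rather than via the scalar measures $M \mapsto \bra{\psi}A(M)\ket{\psi}$, but that is a cosmetic difference). The one place you genuinely diverge is the bounded-variation step in the reverse direction, and there you are right where the paper is not. The paper's proof asserts $\|\psi(\indicator{X_i})\| \le \|\psi\|\,\mu(X_i)$, which conflates the $\Linf$-norm of $\indicator{X_i}$ (equal to $1$ for any non-null cell) with its $L^1$-norm, and on that basis concludes that \emph{every} $\mu$-continuous POVM has bounded variation (Corollary~\ref{CorPOVMBoundedVar}). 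Your multiplication-operator example $f \mapsto M_f$ on $L^2(X,\mu)$ refutes this: the associated PVM sends every non-null $M$ to a projection of norm $1$, so partitions into $n$ non-null cells give total variation $n$. Your trace estimate $\sum_i \|A(X_i)\| \le \tr A(X) = \dim H$ correctly isolates the finite-dimensional case as the one where bounded variation comes for free.

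The consequence you should state explicitly is that, as written, the proposition's bijection only survives in finite dimensions: for infinite-dimensional $H$ the map from $\mu$-continuous bounded-variation POVMs to normal positive unital maps is injective but not surjective, since your example is a normal positive unital map whose associated POVM lies outside the left-hand class. (The cleaner repair, which the paper implicitly wants, is to drop the bounded-variation hypothesis on the left; your own argument shows everything else goes through without it, since the forward direction never truly needs the variation bound --- for $0 \le f \le 1$ one has $0 \le \int f \dd A \le \id$ directly, so your appeal to $\sum_i \|A(M_i)\|$ for boundedness on simple functions, and your aside invoking $\der{A}{\mu}$, are both avoidable.) Modulo making that restriction or that repair explicit, your proof is sound and is more careful than the paper's on exactly the point where care is needed.
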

\begin{proof}
    Let $\varphi : \Sigma_X \to \Ef(H)$ be a POVM.
    Define a map
    $\psi_\varphi : \Linf(X, \mu) \to \B(H)$ by $\psi_\varphi(f) = \int_X f
    \dd \varphi$. This integral is well-defined since $f$ is essentially
    bounded. To verify that the map $\psi_\varphi$ is well-defined, we have to
    check that it maps functions that are zero almost everywhere to the zero
    operator. If an indicator function $\indicator{M}$ is zero almost
    everywhere, then $\mu(M) = 0$, so from $\mu$-continuity of $\varphi$ it
    follows that $\int \indicator{M} \dd \varphi = \varphi(M) = 0$. For
    general functions in $\Linf(X, \mu)$ this follows from linearity and
    continuity of the integral.
    Furthermore the map $\psi_\varphi$ is positive and unital. It preserves
    joins of countable chains since $\varphi$ is a POVM. Every positive map
    between von Neumann algebras is bounded, see \cite[Prop.
    1.3.7]{Landsman98} for a proof.

    In the other direction, given a map $\psi : \Linf(X,\mu) \to \B(H)$,
    define $\varphi_\psi : \Sigma_X \to \Ef(H)$ by $\varphi_\psi(M) =
    \psi(\indicator{M})$. Then $\varphi_\psi(M)$ is positive because $\psi$
    preserves positivity, and $\varphi_\psi(M) \leq \psi(\indicator{X}) =
    \id$, so $\varphi_\psi(M)$ is an effect. The map $\varphi_\psi$ is a
    morphism of $\sigma$-effect algebras since $\psi$ is linear, normal, and
    unital. To establish $\mu$-continuity of $\varphi_\psi$, suppose that
    $\mu(M) = 0$. Then $\indicator{M}$ is zero almost everywhere, hence
    $\varphi_\psi(M) = \psi(\indicator{M}) = \psi(0) = 0$.
    Finally, $\varphi_\psi$ has bounded variation because
    \[ \begin{array}{c}
        \sup \sum_i || \varphi_\psi(X_i) ||
        = \sup \sum_i || \psi(\indicator{X_i}) ||
        \leq \sup \sum_i ||\psi || \mu(X_i)
        = || \psi || \mu(X) < \infty.
    \end{array} \]
    It is easy to see that both constructions are inverses.
\end{proof}

Observe that the construction of the map between von Neumann algebras from a
POVM did not use the fact that the POVM has bounded variation. Thus we obtain
the following consequence.

\begin{corollary}
    Every $\mu$-continuous POVM has bounded variation.
    \label{CorPOVMBoundedVar}
\end{corollary}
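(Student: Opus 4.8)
The plan is to exploit the remark immediately preceding the statement: in the forward direction of Proposition~\ref{PropBijectionPOVMCStar}, the construction of the map $\psi_\varphi : \Linf(X,\mu) \to \B(H)$, $\psi_\varphi(f) = \int_X f \dd \varphi$, used only the $\mu$-continuity of the POVM and never its bounded variation. So first I would take an arbitrary $\mu$-continuous POVM $\varphi : \Sigma_X \to \Ef(H)$ and form $\psi_\varphi$ exactly as there. That argument shows $\psi_\varphi$ is a well-defined normal positive unital map; being a positive map between von Neumann algebras, it is in particular bounded.

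Next I would feed $\psi_\varphi$ into the reverse direction of Proposition~\ref{PropBijectionPOVMCStar}. On indicators this reverse construction returns $\varphi_{\psi_\varphi}(M) = \psi_\varphi(\indicator{M}) = \int_X \indicator{M} \dd \varphi = \varphi(M)$, so it recovers $\varphi$ itself. But the final estimate in the proof of that proposition shows precisely that the POVM produced from any normal positive unital map has bounded variation, the bound being controlled by $|| \psi_\varphi ||$ together with the finite total mass $\mu(X)$. Applying this to $\psi_\varphi$ therefore yields bounded variation of $\varphi_{\psi_\varphi} = \varphi$, which is the claim.

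The point is that no new work is required: the two halves of Proposition~\ref{PropBijectionPOVMCStar} already contain everything, and the corollary is simply the observation that the forward half is insensitive to the bounded-variation hypothesis while the backward half always delivers it. The only step worth stressing is the automatic boundedness of $\psi_\varphi$, since it is finiteness of $|| \psi_\varphi ||$ together with $\mu(X) < \infty$ that forces the variation sums to stay bounded; I do not anticipate any genuine obstacle beyond assembling these facts in the right order.
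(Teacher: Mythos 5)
Your argument is exactly the paper's: the corollary is stated as an immediate consequence of the observation preceding it, namely that the forward construction in Proposition~\ref{PropBijectionPOVMCStar} uses only $\mu$-continuity, while the backward direction's estimate $\sup\sum_i \|\psi(\indicator{X_i})\| \leq \|\psi\|\,\mu(X) < \infty$ delivers bounded variation of the recovered POVM. Correct, and essentially identical to the paper's (implicit) proof.
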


Therefore we can simply work with $\mu$-continuous POVMs from now on, ignoring
the condition on the variation.  It is also possible to characterize
projection-valued measures as maps between von Neumann algebras, by
restricting the above correspondence.

\begin{corollary}
    There is a bijective correspondence between:
    \begin{itemize}
        \item PVMs $\Sigma_X \to \Proj(H)$ that are $\mu$-continuous;
        \item Normal $*$-homomorphisms $\Linf(X,\mu) \to \B(H)$.
    \end{itemize}
    \label{CorBijectionPVMCStar}
\end{corollary}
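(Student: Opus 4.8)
The plan is to show that the bijection of Proposition~\ref{PropBijectionPOVMCStar} restricts to the desired correspondence by checking that, under it, a $\mu$-continuous POVM $\varphi$ is projection-valued precisely when the associated map $\psi_\varphi(f) = \int_X f \dd \varphi$ is multiplicative, hence a $*$-homomorphism. Because Corollary~\ref{CorPOVMBoundedVar} guarantees that every $\mu$-continuous POVM automatically has bounded variation, no extra hypotheses intervene, and it suffices to carve out the projection-valued POVMs on one side and the normal $*$-homomorphisms on the other.

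First I would dispose of the easy implication. A normal $*$-homomorphism is in particular a normal positive unital map, so it equals $\psi_\varphi$ for a unique $\mu$-continuous POVM $\varphi$. For each measurable $M$, the indicator $\indicator{M}$ is a projection in $\Linf(X,\mu)$, and a $*$-homomorphism sends projections to projections; hence $\varphi(M) = \psi_\varphi(\indicator{M})$ lies in $\Proj(H)$ and $\varphi$ is a PVM.

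For the converse I would start from the standard multiplicativity of a projection-valued measure. The key lemma is that if $P$ and $Q$ are projections whose sum $P+Q$ is again a projection, then $PQ = 0$: expanding $(P+Q)^2 = P+Q$ gives $PQ + QP = 0$, and multiplying this identity on the left and on the right by $P$ and subtracting yields $PQ = QP$, whence $PQ = 0$. Applying this to disjoint measurable sets, where $\varphi(M) + \varphi(N) = \varphi(M \cup N)$ is a projection, shows $\varphi(M)\varphi(N) = 0$. Decomposing arbitrary $M, N$ along $M \setminus N$, $N \setminus M$ and $M \cap N$ and using that $\varphi(M \cap N)$ is idempotent then gives the multiplication rule $\varphi(M)\varphi(N) = \varphi(M \cap N)$. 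Since $\indicator{M}\indicator{N} = \indicator{M \cap N}$, this says exactly that $\psi_\varphi$ is multiplicative on indicator functions, and by linearity it is multiplicative on simple functions. Positivity of $\psi_\varphi$ already ensures preservation of the involution, since $\psi_\varphi$ commutes with passing to real and imaginary parts, so only multiplicativity remains to be upgraded from simple functions to all of $\Linf(X,\mu)$.

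That last upgrade is the step I expect to be the main obstacle: one must approximate an arbitrary essentially bounded function uniformly by simple functions and invoke normality, together with boundedness of $\psi_\varphi$, to pass the identity $\psi_\varphi(fg) = \psi_\varphi(f)\psi_\varphi(g)$ to the limit. Once this is in place, $\psi_\varphi$ is a normal $*$-homomorphism, and the two assignments $\varphi \mapsto \psi_\varphi$ and $\psi \mapsto \varphi_\psi$ are mutually inverse by Proposition~\ref{PropBijectionPOVMCStar}, which establishes the bijection.
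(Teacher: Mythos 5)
Your proposal is correct and follows essentially the same route as the paper: restrict the bijection of Proposition~\ref{PropBijectionPOVMCStar}, get the easy direction from the fact that a $*$-homomorphism sends the projections $\indicator{M}$ to projections, and get the converse by proving multiplicativity on indicator functions and then extending to all of $\Linf(X,\mu)$. Your treatment is in fact slightly more complete than the paper's, since you prove the orthogonality identity $\varphi(M)\varphi(N)=\varphi(M\cap N)$ that the paper merely cites as a known property of PVMs, and your norm-density-plus-boundedness argument for the final extension is a sound (indeed cleaner) alternative to the paper's appeal to countable joins of simple functions.
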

\begin{proof}
    Let $\psi : \Linf(X,\mu) \to \B(H)$ be a normal $*$-homomorphism. By
    Proposition~\ref{PropBijectionPOVMCStar}, it gives a $\mu$-continuous
    POVM $\varphi : \Sigma_X \to \Ef(H)$. We have to check that each
    $\varphi(M)$ is a projection:
    \[ \varphi(M)^2 = \psi(\indicator{M})^2 = \psi( (\indicator{M})^2) =
    \psi(\indicator{M}) = \varphi(M), \]
    where we used that $\psi$ preserves multiplication in the second equality
    sign.

    Conversely, a $\mu$-continuous PVM $\varphi$ gives a normal positive
    unital map $\psi : \Linf(X,\mu) \to \B(H)$. To show that $\psi$ preserves
    multiplication, we start by considering indicator functions:
    \[ \psi(\indicator{M} \indicator{N}) = \psi(\indicator{M \cap N}) =
    \varphi(M\cap N) = \varphi(M) \varphi(N) = \psi(\indicator{M})
    \psi(\indicator{N}) \]
    The third equality sign is a property that characterizes the
    projection-valued measures. All essentially bounded functions from $X$ to
    $\C$ can be written as a countable join of sums of indicator functions,
    modulo equality almost everywhere. Since $\psi$ preserves sums and
    countable joins, it follows that $\psi(fg) = \psi(f) \psi(g)$ for all $f$
    and $g$.
\end{proof}

The characterization of continuous POVMs as maps between the von Neumann
algebras $\Linf(X,\mu)$ and $\B(H)$ is in line with the Heisenberg picture of
quantum mechanics. There is also a characterization from the Schr\"odinger
point of view, analogous to considering POVMs as maps between
$\Giry$-algebras.

\begin{proposition}
    \label{PropDualityCStar}
    There is a bijective correspondence between:
    \begin{itemize}
        \item Normal positive unital maps $\Linf(X,\mu) \to \B(H)$;
        \item Maps $\TC(H) \to L^1(X,\mu)$ of base norm spaces.
    \end{itemize}
\end{proposition}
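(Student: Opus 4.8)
The plan is to realize this correspondence as an instance of the standard duality between a von Neumann algebra and its predual. Recall from the discussion above that $\B(H)$ is the dual of $\TC(H)$ via the pairing $\langle \rho, A\rangle = \tr(\rho A)$, and that $\Linf(X,\mu)$ is the dual of $L^1(X,\mu)$ via the pairing $\langle g, f\rangle = \int_X g f \dd \mu$, with the two trace maps $\tr : \TC(H) \to \C$ and $\int_X(\blank)\dd\mu : L^1(X,\mu)\to\C$ arising as evaluation at the respective units. The key external fact I would invoke is that a positive linear map between von Neumann algebras is normal, in the order-continuity sense used here, if and only if it is weak-$*$ continuous, and hence exactly when it is the adjoint of a bounded linear map between the preduals; see e.g. \cite{KadisonR83}.

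Given this, the bijection sends a normal positive unital map $\psi : \Linf(X,\mu)\to\B(H)$ to its preadjoint $\psi_* : \TC(H)\to L^1(X,\mu)$, the unique map satisfying $\langle \psi_*(\rho), f\rangle = \langle \rho, \psi(f)\rangle$ for all $\rho\in\TC(H)$ and $f\in\Linf(X,\mu)$. First I would check that $\psi_*$ is a morphism of base norm spaces. Positivity follows from the self-duality of the positive cones: since $\psi_*(\rho)\geq 0$ in $L^1(X,\mu)$ is equivalent to $\int_X \psi_*(\rho) f \dd\mu\geq 0$ for all $0\leq f\in\Linf(X,\mu)$, and the left-hand side equals $\tr(\rho\,\psi(f))$, which is nonnegative whenever $\rho\geq 0$ and $\psi(f)\geq 0$, positivity of $\psi$ forces positivity of $\psi_*$. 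Preservation of the trace maps is then a one-line computation using unitality:
\[ \int_X \psi_*(\rho)\dd\mu = \langle\psi_*(\rho),\indicator{X}\rangle = \langle\rho,\psi(\indicator{X})\rangle = \tr(\rho\,\id) = \tr(\rho). \]

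Conversely, from a base norm space morphism $\phi : \TC(H)\to L^1(X,\mu)$, which is bounded since positive linear maps between base norm spaces are bounded, I would form the Banach-space adjoint $\phi^* : \Linf(X,\mu)\to\B(H)$, which is automatically weak-$*$ continuous and hence normal. The same cone-duality argument shows $\phi^*$ is positive, and the trace-preservation condition $\int_X\phi(\rho)\dd\mu = \tr(\rho)$ gives, for every $\rho$, that $\langle\rho,\phi^*(\indicator{X})\rangle = \langle\phi(\rho),\indicator{X}\rangle = \tr(\rho) = \langle\rho,\id\rangle$, so $\phi^*(\indicator{X})=\id$ and $\phi^*$ is unital. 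That the two assignments are mutually inverse is exactly the statement that taking preadjoints and adjoints are inverse operations, which holds because $\B(H)$ and $\Linf(X,\mu)$ are the duals of $\TC(H)$ and $L^1(X,\mu)$ respectively.

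The step I expect to be the main obstacle is the first one: pinning down that the order-theoretic notion of normality used in this paper coincides with weak-$*$ continuity, so that the preadjoint genuinely exists and the assignment $\psi\mapsto\psi_*$ is well-defined. Once that equivalence is in hand, the positivity transfer via duality of the cones and the unital-versus-trace-preserving bookkeeping are routine.
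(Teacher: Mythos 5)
Your proposal is correct and follows essentially the same route as the paper: the paper's explicit formulas $\Phi_\varphi(T)(f) = \tr(T\varphi(f))$ and $\varphi_\Phi(f)(T) = \int_X f(x)\Phi(T)(x)\dd x$ are precisely your preadjoint and adjoint constructions via the pairings $\B(H) \cong \TC(H)^*$ and $\Linf(X,\mu) \cong L^1(X,\mu)^*$, with the same checks of positivity, unitality versus trace preservation, and normality. The only difference is presentational: you invoke the general ``normal iff weak-$*$ continuous iff preadjointable'' theorem up front, whereas the paper verifies the needed normality facts directly on the concrete maps.
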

\begin{proof}
    Let $\varphi : \Linf(X,\mu) \to \B(H)$ be normal positive unital. The
    predual $\Linf(X,\mu)_{\#}$ of $\Linf(X,\mu)$ is isomorphic to
    $L^1(X,\mu)$, so to define a map $\TC(H) \to L^1(X,\mu)$, we can also
    define a map $\Phi_\varphi$ from $\TC(H)$ into the normal functionals on
    $\Linf(X,\mu)$. For this we take $\Phi_\varphi(T)(f) = \tr(T \varphi(f))$.
    The assignment $f \mapsto \tr(T \varphi(f))$ lies in $\Linf(X,\mu)_{\#}$,
    because $\varphi$ is normal. The map $\Phi_\varphi$ is positive since
    $\varphi$ is. To check that $\Phi_\varphi$ commutes with the maps into
    $\C$, use that the integration map $\int_X (\blank) \dd \mu : \Linf(X,\mu)
    \to \C$ corresponds to the map $\Linf(X,\mu)_{\#} \to \C$ given by $\psi
    \mapsto \psi(1)$. From this it follows that $\Phi_\varphi$ is a map of
    base norm spaces.

    Now we will show how to assign a map $ \Linf(X,\mu) \to \B(H)$ to a map
    $\Phi : \TC(H) \to L^1(X,\mu)$. First define a map $\varphi_{\Phi} :
    \Linf(X,\mu) \to \TC(H)^*$ by $\varphi_{\Phi}(f)(T) = \int_X f(x)
    \Phi(T)(x) \dd x$. This integral exists since $f$ is bounded and $\Phi(T)$
    is integrable. Since the dual of the Banach space $\TC(H)$ is isomorphic
    to the space of bounded operators on $H$, this gives a map $\Linf(X,\mu)
    \to \B(H)$, also denoted $\varphi_{\Phi}$. Positivity of $\varphi_\Phi$
    follows from positivity of $\Phi$. To show that $\varphi_\Phi$ is unital,
    note that the unit of $\Linf(X,\mu)$ is the constant function with value
    1, and the unit of $\TC(H)^*$ is the trace.  Then unitality of
    $\varphi_\Phi$ follows since $\Phi$ is a morphism of base norm spaces:
    \[ \begin{array}{c}
        \varphi_\Phi(1)(T) = \int_X \Phi(T)(x) \dd x = \tr(T).
    \end{array} \]
    The map $\varphi_\Phi$ is normal because integrals are continuous.  The
    constructions above are clearly inverses.
\end{proof}

Again we can rephrase the duality result above as an equivalence between comma
categories. First we will establish the following diagram of categories and
functors.
\[
\xymatrix@R+1pc { & \Measure \ar[dl]_{L^1} \ar[dr]^{\Linf} \\
\BNS & & (\CstarPU)\op  \\
& \Hilbisomet \ar[ul]^{\TC} \ar[ur]_{\B} }
\]
The functors $\TC$ and $\B$ act on morphisms via conjugation. Formally,
$\B(f)(A) = f^{\dag} A f$ and $\TC(f)(T) = f T f^{\dag}$.

The category $\Measure$ has measure spaces as objects. A morphism from
$(X,\mu)$ to $(Y,\nu)$ is a measurable map $f : X \to Y$ such that the measure
$\mu \circ f^{-1}$ on $Y$ is $\nu$-continuous, in other words, $\nu(N) = 0$
implies $\mu(f^{-1}[N]) = 0$. We have seen the action of $\Linf$ and $L^1$ on
objects before. Let $f : (X,\mu) \to (Y,\nu)$ be a morphism in $\Measure$.
Define $\Linf(f) : \Linf(Y,\nu) \to \Linf(X,\mu)$ by $\Linf(f)(\varphi) =
\varphi \circ f$. If $\varphi$ is $\nu$-essentially bounded, then $\varphi
\circ f$ is $\mu$-essentially bounded because $\mu \circ f^{-1}$ is
$\nu$-continuous. It is clear that $\Linf(f)$ is a morphism in $\CstarPU$.
To define $L^1(f)(\varphi)$ for $\varphi \in L^1(X,\mu)$, we first introduce a
new measure $\lambda$ on $Y$ via $\lambda(N) = \int_{f^{-1}[N]} \varphi \dd
\mu$. This $\lambda$ is $\nu$-continuous, so we can define $L^1(f)(\varphi)$
to be its derivative $\der{\lambda}{\nu}$. Thus $L^1(f)(\varphi)$ is the
unique function satisfying $\int_N L^1(f)(\varphi) \dd \nu = \int_{f^{-1}[N]}
\varphi \dd \mu$. Clearly $L^1(f)(\varphi)$ is integrable, and $L^1(f)$ is a
morphism in $\BNS$.

\begin{corollary}
    The categories $\comma{\TC}{L^1}$ and $\comma{\B}{\Linf}$ are equivalent.
    \label{CorPOVMCStar}
\end{corollary}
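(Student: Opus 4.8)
The plan is to build on Proposition~\ref{PropDualityCStar}, which already supplies the correspondence on objects, and then to check that this same correspondence carries commutative squares in one comma category to commutative squares in the other. I would first observe that morphisms in the two comma categories have identical shape: a morphism is a pair $(g,f)$ consisting of an isometry $g : H \to K$ in $\Hilbisomet$ and a measure-space map $f : (X,\mu) \to (Y,\nu)$ in $\Measure$. Moreover, since the comma category $\comma{\B}{\Linf}$ is formed over $(\CstarPU)\op$, an object of it unwinds to an arrow $\Linf(X,\mu) \to \B(H)$ in $\CstarPU$, that is, a normal positive unital map, matching the first item of Proposition~\ref{PropDualityCStar}. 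Hence it suffices to take the object bijection from that proposition and verify that a pair $(g,f)$ is a morphism in $\comma{\B}{\Linf}$ exactly when it is one in $\comma{\TC}{L^1}$. This in fact produces an isomorphism of categories, which is stronger than the claimed equivalence.

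So fix objects $\varphi : \Linf(X,\mu) \to \B(H)$ and $\varphi' : \Linf(Y,\nu) \to \B(K)$ in $\CstarPU$, with corresponding base norm maps $\Phi : \TC(H) \to L^1(X,\mu)$ and $\Phi' : \TC(K) \to L^1(Y,\nu)$ determined by $\Phi(T)(h) = \tr(T\varphi(h))$ as in Proposition~\ref{PropDualityCStar}. Unwinding the functors (recall $\B(g)(A) = g^{\dag} A g$, $\TC(g)(T) = g T g^{\dag}$, and that $\Linf(f)$ is precomposition with $f$), the square for $(g,f)$ in $\comma{\B}{\Linf}$ asserts
\[ g^{\dag} \varphi'(h) g = \varphi(h \circ f) \qquad \text{for all } h \in \Linf(Y,\nu), \]
while the square in $\comma{\TC}{L^1}$ asserts
\[ \Phi'(g T g^{\dag}) = L^1(f)(\Phi(T)) \qquad \text{for all } T \in \TC(H). \]

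To connect the two, I would first record the change-of-variables identity $\int_Y h \cdot L^1(f)(\varphi_0) \dd \nu = \int_X (h \circ f)\cdot \varphi_0 \dd \mu$ for every $\varphi_0 \in L^1(X,\mu)$ and $h \in \Linf(Y,\nu)$. This holds for indicators $h = \indicator{N}$ by the defining property $\int_N L^1(f)(\varphi_0) \dd \nu = \int_{f^{-1}[N]} \varphi_0 \dd \mu$ of $L^1(f)$, and extends to all essentially bounded $h$ by linearity and dominated convergence. Pairing $L^1(f)(\Phi(T))$ against $h$ and using this identity gives $L^1(f)(\Phi(T))(h) = \Phi(T)(h\circ f) = \tr(T\varphi(h\circ f))$, whereas the other side evaluates to $\Phi'(gTg^{\dag})(h) = \tr(gTg^{\dag}\varphi'(h)) = \tr(T g^{\dag}\varphi'(h) g)$ by cyclicity of the trace. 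Thus the second square holds iff $\tr(T g^{\dag}\varphi'(h)g) = \tr(T\varphi(h\circ f))$ for all $T$ and $h$.

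Finally, because the identification $\TC(H)^* \cong \B(H)$ used in Proposition~\ref{PropDualityCStar} makes the trace pairing $\TC(H) \times \B(H) \to \C$ non-degenerate, the equality of traces for all $T$ is equivalent to $g^{\dag}\varphi'(h)g = \varphi(h\circ f)$, which is precisely the first square. Together with the object bijection this completes the isomorphism, hence the equivalence, of $\comma{\TC}{L^1}$ and $\comma{\B}{\Linf}$. I expect the main technical point to be the change-of-variables identity combined with the separation argument; the well-definedness of $L^1(f)$ and $\Linf(f)$ rests on the $\nu$-continuity condition built into the morphisms of $\Measure$, which is exactly what guarantees that the pushforwards and integrals above are meaningful.
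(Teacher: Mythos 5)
Your proposal is correct and follows essentially the same route as the paper: reduce to the object bijection of Proposition~\ref{PropDualityCStar} and then check compatibility with morphisms, using cyclicity of the trace for the Hilbert-space leg and the defining property $\int_N L^1(f)(\varphi_0)\dd\nu = \int_{f^{-1}[N]}\varphi_0\dd\mu$ for the measure-space leg. The paper phrases this as two naturality equations for $\varphi\mapsto\Phi_\varphi$ and gets the converse from bijectivity on objects, whereas you phrase it as a single biconditional closed off by non-degeneracy of the trace pairing; the underlying computations are identical.
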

\begin{proof}
    On objects, this was established in Proposition~\ref{PropDualityCStar}. On
    morphisms, this amounts to proving naturality of the correspondence in the
    Proposition. Pick any isometry $f : H \to K$ and let $\varphi :
    \Linf(X,\mu) \to \B(K)$ be a normal positive unital map. Then we have to
    show that $\Phi_{\B(f) \circ \varphi} = \Phi_\varphi \circ \TC(f)$. This
    holds because
    \[ \Phi_{\B(f) \circ \varphi} (T)(g) = \tr(T f^{\dag} \varphi(g) f) =
    \tr(f T f^{\dag} \varphi(g)) = (\Phi_{\varphi} \circ \TC(f))(T)(g). \]
    Finally we have to prove that $\Phi_{\varphi \circ \Linf(f)}(T) = (L^1(f)
    \circ \Phi_\varphi)(T)$ for $f : (X,\mu) \to (Y,\nu)$. This is equivalent
    to showing that the integrals $\int_N \Phi_{\varphi \circ \Linf(f)}(T) \dd
    \nu$ and $\int_N (L^1(f) \circ \Phi_\varphi)(T) \dd \nu$ are equal for
    each $N$. If we identify elements of $L^1(Y)$ with normal functionals on
    $\Linf(Y)$, then integration over $N$ amounts to plugging in the
    functional $\indicator{N}$. Hence the first integral is equal to $\tr(T
    \varphi(\indicator{N} \circ f))$, and the second integral is equal to
    $\tr(T \varphi(\indicator{f^{-1}[N]}))$, thus the integrals are the same.
\end{proof}

\section{Conclusion}\label{SecConclusion}

We have established bijective correspondences between the following
representations of POVMs:
\begin{itemize}
    \item Morphisms of $\sigma$-effect algebras $\Sigma_X \to \Ef(H)$;
    \item Morphisms of $\sigma$-effect modules $\Meas(X, [0,1]) \to \Ef(H)$;
    \item Morphisms of $\Giry$-algebras $\DM(H) \to \Giry(X)$.
\end{itemize} 
In the situation where the space $X$ is compact and equipped with a finite
measure $\mu$, we obtain correspondences between the following:
\begin{itemize}
    \item POVMs $\Sigma_X \to \Ef(H)$ that are $\mu$-continuous;
    \item Normal positive unital maps $\Linf(X,\mu) \to \B(H)$;
    \item Maps $\TC(H) \to L^1(X,\mu)$ of base norm spaces.
\end{itemize}
These correspondences can be phrased as equivalences between comma categories.
The object part of these equivalences gives the bijective correspondences
above, and since we have shown that there is also an equivalence between the
morphisms of the comma categories, the above correspondences are natural.

Many POVMs occuring in physics are covariant with respect to a symmetry group
or groupoid, as discussed in \cite{Landsman98,Schroeck96}. For future
research, it would be interesting to see how our results can be extended to
the covariant setting using convolution algebras. Another possible direction
would be to study the sequential composition for POVMs in more detail, for
example by finding an axiomatization generalizing the one for effects in
\cite{GudderG02}.

\paragraph{Acknowledgements.}
This research has been financially supported by the Netherlands Organisation
for Scientific Research (NWO) under TOP-GO grant no.\ 613.001.013 (The logic
of composite quantum systems).
Thanks are due to Robert Furber and Bart Jacobs for helpful discussions and
comments.

\bibliographystyle{eptcs}

{\small
\bibliography{all}
}

\end{document}